\numberwithin{equation}{section}
\newtheorem{thm}{Theorem}              
\newtheorem{lemma}[thm]{Lemma}
\def\real{{\bf R}}
\def\S{{\bf S}}
\def\k{\breve{\kappa}}
\def\ie{{\it i.e. }}
\def\eg{{\it e.g. }}
\begin{document}
\twocolumn

\title{A geometric invariant for the study of planar curves and its application
to spiral tip meander.} 
\author{Copyright \copyright \; 2016 Scott Hotton \\ 
email: {\tt scotton@sdf.org}}
\date{}
\maketitle

\tableofcontents
\section{Introduction}

\nopagebreak

\section{The geometric invariant $|\k|$}
\label{S:kappabreve}

   Lets begin by reviewing some differential geometry of planar curves.  We 
let $t$ stand for time and we denote the position of the moving point at time 
$t$ by $(x(t), y(t))^T$.  There are two important functions associated to twice 
differentiable planar curves, their speed and local curvature (more commonly 
referred to as just curvature).  We denote the speed by $v(t)$ and the local 
curvature by $\kappa(t)$.  Also we will denote the direction of the velocity by 
$\theta \in \S$.  Roughly speaking the local curvature tells us how fast 
$\theta$ is changing at a point of the curve.  

   If the velocity is defined and never equal to $(0,0)^T$ the curve is said to 
be {\it immersed}.  If $v(t) \equiv 1$ the curve is said to have {\it unit 
speed}.  In theory immersed curves can be reparameterized to have unit speed.  
This is done using the concept of arc length.  A closed form for arc length can 
sometimes be obtained from the integral
\begin{equation*}
s(t) = \int_0^t v(\tau) \; d\tau 
= \int_0^t \sqrt{\dot{x}(\tau)^2+\dot{y}(\tau)^2}\; d\tau
\end{equation*}
although in actual practice its often not feasible to find an anti-derivative 
for $\sqrt{\dot{x}(\tau)^2+\dot{y}(\tau)^2}$ because of the square root.  In 
any case local curvature has been defined as $\kappa(s) = d\theta/ds$.  

   Although its usually not feasible to compute $\kappa(s)$ it is usually not
too difficult to compute
\begin{equation*}
\kappa(s(t)) = \frac{ \dot{x}(t)\ddot{y}(t) - \dot{y}(t)\ddot{x}(t)}
                           {(\dot{x}(t)^2 + \dot{y}(t)^2)^{3/2}} 
\end{equation*}
and this is often sufficient for many purposes.  Its common practice to write 
$\kappa(t)$ for $\kappa(s(t))$ and we will use this convention here.  

   Roughly speaking the total curvature of a curve is how much $\theta$ changes
over the whole curve.  Historically total curvature has been defined as the 
integral of local curvature.  This had the drawback of making it seem that 
local curvature needed to be defined in order for total curvature to be 
defined.  Fox and Milnor realized however that total curvature is a meaningful 
concept for all geometric curves \cite{fox,sullivan}.  Moreover wherever local 
curvature is a meaningful concept it can be defined in terms of total curvature 
so total curvature is the more fundamental concept.  Total curvature can still 
be computed from the integral of local curvature when local curvature is 
defined but this is no longer regarded as a definition of total curvature.  
This is now known as the Fox-Milnor theorem and it is how we will compute total
curvature here.

   Note that the Fox-Milnor theorem computes total curvature as the integral of
$\kappa(s)$ with respect to arc length not as the integral of $\kappa(t)$ with 
respect to time.  By the change of variables theorem
\begin{equation*}
\int_{s_1}^{s_2} \kappa(\sigma) \; d\sigma
= \int_{t_1}^{t_2} \kappa(\tau) v(\tau) \; d\tau
\neq \int_{t_1}^{t_2} \kappa(\tau) \; d\tau
\end{equation*}

   The integral for total curvature could be reformulated using 
$\dot{\theta}(t) = \kappa(t) v(t)$.  However if we apply the fundamental 
theorem of calculus naively we might write
\begin{equation*}
\int_{t_1}^{t_2} \dot{\theta}(\tau) \; d\tau = \theta(t_2) - \theta(t_1)
\end{equation*}
but this is only the difference between the starting direction and final 
direction of the velocity and it overlooks the possibility that the velocity 
may have undergone several complete turns during the time interval
$[t_1,t_2]$.  To take this possibility into consideration we define the 
function.
\begin{equation*}
\varphi(t) = \theta(0) + \int_{0}^{t} \kappa(\tau) v(\tau) \; d\tau
\end{equation*}

   The set of all values for $\theta$ has the topology of a circle whereas 
$\varphi$ can take on any real number value since it is the integral of the 
real valued function $\kappa(t) v(t)$.  We can recover $\theta$ from $\varphi$ 
by taking its value modulo $2\pi$.  The rate of change of $\theta$ and 
$\varphi$ are numerically equal, \ie $\dot{\varphi}(t) = \dot{\theta}(t)$.  By 
the Fox-Milnor theorem the total curvature of the curve from $t=t_1$ to $t=t_2$
is 
\begin{equation*}
\int_{t_1}^{t_2} \dot{\varphi}(\tau) \; d\tau = \varphi(t_2) - \varphi(t_1)
\end{equation*}
The {\it turning number} is the total curvature divided by $2\pi$.  It measures 
how far the tangent vector has turned over the length of the curve.

   The velocity of the curve can be expressed in terms of $\varphi(t)$ as
$v(t)\; (\cos(\varphi(t)), \; \sin(\varphi(t)))^T$.  Given the initial point of 
the curve, $(x(0),\; y(0))^T$, we can express the point at other times as
\begin{eqnarray} \label{E:eq1}
\begin{pmatrix}
x(t) \\ y(t) 
\end{pmatrix}
 &=& 
\begin{pmatrix}
x(0) \\ y(0) 
\end{pmatrix}
+
\int_0^t v(\tau) 
\begin{pmatrix}
\cos(\varphi(\tau)) \\ 
\sin(\varphi(\tau))
\end{pmatrix}
d\tau \nonumber \\
\end{eqnarray}

   We now assume that the speed and curvature are defined for all $t \in \real$
and that they are periodic functions with a common minimal period $T>0$.  This
can occur by $(x(t),\; y(t))^T$ having period $T$ but this is not necessary.  
We call an arc within such a curve whose domain is an interval of length $T$ a 
{\it periodic arc} of the curve.  We show how to partition such curves into 
congruent periodic arcs below.

   It follows that $\dot{\varphi}(t) = \kappa(t)v(t)$ is a periodic function 
with period $T$.  The integral of a periodic function is periodic if its 
average value over one period is zero.  And if we subtract the average value 
from a periodic function its integral will be periodic.  So we set
\begin{eqnarray*}
  \overline{\kappa} &=& \frac{1}{T} \int_0^{T} v(\tau) \kappa(\tau) \; d\tau \\
\widetilde{\varphi}(t) &=&  \kappa(0) v(0) + \int_0^t \kappa(\tau) v(\tau)
- \overline{\kappa} \;\; d\tau
\end{eqnarray*}
This allows us to write $\varphi(t) = \; \overline{\kappa}\; t + 
\widetilde{\varphi}(t)$ where $\widetilde{\varphi}(t)$ has period $T$.  We let
$R_{\phi}$ stands for a rotation by $\phi$ radians.  Even though $\varphi(t)$ 
is not periodic whenever $\overline{\kappa} \neq 0$ it is the case that 
\begin{lemma} \label{T:varphi}  For all $t \in \real$
\begin{eqnarray*} 
R_{\overline{\kappa} \; T}
\begin{pmatrix}
 \cos( \varphi(t) ) \\ 
 \sin( \varphi(t) )
\end{pmatrix}
=
\begin{pmatrix}
 \cos( \varphi(t+T) ) \\ 
 \sin( \varphi(t+T) )
\end{pmatrix}
\end{eqnarray*} 
\end{lemma}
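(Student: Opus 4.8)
The plan is to reduce the lemma to a single \emph{quasi-periodicity} identity for $\varphi$, namely $\varphi(t+T) = \varphi(t) + \overline{\kappa}\,T$, and then invoke the elementary fact that multiplying a unit vector on the left by $R_\phi$ adds $\phi$ to its polar angle. First I would make explicit why $\widetilde{\varphi}$ genuinely has period $T$, since the whole argument rests on it. The integrand $\kappa(\tau)v(\tau) - \overline{\kappa}$ is $T$-periodic because $\kappa$ and $v$ are, and it has mean value $0$ over $[0,T]$ by the very definition of $\overline{\kappa}$; hence, using the translation invariance of the integral of a periodic function, for every $t \in \real$
\[
\widetilde{\varphi}(t+T) - \widetilde{\varphi}(t)
= \int_t^{t+T}\bigl(\kappa(\tau)v(\tau) - \overline{\kappa}\bigr)\,d\tau
= \int_0^{T}\bigl(\kappa(\tau)v(\tau) - \overline{\kappa}\bigr)\,d\tau = 0 .
\]

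From the decomposition $\varphi(t) = \overline{\kappa}\,t + \widetilde{\varphi}(t)$ established in the text, the $T$-periodicity of $\widetilde{\varphi}$ then gives $\varphi(t+T) = \overline{\kappa}(t+T) + \widetilde{\varphi}(t+T) = \overline{\kappa}\,t + \overline{\kappa}\,T + \widetilde{\varphi}(t) = \varphi(t) + \overline{\kappa}\,T$. With this in hand, the conclusion is immediate from the angle-addition formulas for sine and cosine (equivalently, from $R_{\phi}(\cos\alpha,\sin\alpha)^T = (\cos(\alpha+\phi),\sin(\alpha+\phi))^T$):
\[
R_{\overline{\kappa}\,T}
\begin{pmatrix}\cos(\varphi(t)) \\ \sin(\varphi(t))\end{pmatrix}
=
\begin{pmatrix}\cos(\varphi(t)+\overline{\kappa}\,T) \\ \sin(\varphi(t)+\overline{\kappa}\,T)\end{pmatrix}
=
\begin{pmatrix}\cos(\varphi(t+T)) \\ \sin(\varphi(t+T))\end{pmatrix}.
\]

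I do not expect a real obstacle here: the statement is essentially a bookkeeping consequence of the $\overline{\kappa}\,t + \widetilde{\varphi}(t)$ splitting. The only point that deserves any care is the step that $\widetilde{\varphi}$ is $T$-periodic --- i.e. that the integral of a zero-mean $T$-periodic function is itself $T$-periodic --- and, as a sanity check, that the proposed decomposition of $\varphi$ is indeed consistent with the defining formula $\varphi(t) = \theta(0) + \int_0^t \kappa(\tau)v(\tau)\,d\tau$. Everything else is the one-line trigonometric identity above.
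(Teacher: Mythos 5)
Your proof is correct and follows the same route the paper indicates: substitute the decomposition $\varphi(t) = \overline{\kappa}\,t + \widetilde{\varphi}(t)$, use the $T$-periodicity of $\widetilde{\varphi}$, and finish with the trigonometric angle-addition identity for $R_{\overline{\kappa}T}$. You merely spell out the details (including the zero-mean argument for why $\widetilde{\varphi}$ is periodic) that the paper leaves as "just a calculation."
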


\begin{proof}
After making the substitution $\varphi(t) = \overline{\kappa} t + 
\widetilde{\varphi}(t)$ the proof is just a calculation which makes use of 
matrix multiplication, addition rules from trigonometry, and the fact that 
$\widetilde{\varphi}(t)$ has period $T$.
\end{proof}
  
   The quantity $\bar{\kappa}\,T$ is the total curvature for the periodic arcs
of the curve.  We denote the turning number of a periodic arc by $\k$.  Note 
this is independent of the choice of periodic arc.  For any $t_0 \in \real$
\begin{eqnarray}\label{E:kappabreve}
\k &=& \frac{\bar{\kappa}\; T}{2\pi} \nonumber
   = \frac{1}{2\pi} \int_{t_0}^{t_0+T} \kappa(\tau) v(\tau) \; d\tau \qquad \\
~  &=& \frac{1}{2\pi} \int_0^{T} 
        \frac{ \dot{x}(\tau)\ddot{y}(\tau) - \dot{y}(\tau)\ddot{x}(\tau)}
                {\dot{x}(\tau)^2 + \dot{y}(\tau)^2 } \; d\tau \qquad
\end{eqnarray} 
An advantage of $\k$ is that, unlike arc length, it doesn't necessarily contain 
a radical under the integral which improves the prospects of finding an 
anti-derivative for use in the fundamental theorem of calculus.

   Winfree coined the term ``isogon contours'' in his study of spiral tip 
meander \cite{winfree91}.  We say here that an {\it isogonal curve} is a level 
curve of $\k$ whether $\k$ is seen as a function in the state space or as a 
function in the parameter space.

   An {\it even} congruence is a congruence of the Euclidean plane which 
preserves the orientation of the plane.  An {\it odd} congruence reverses the
orientation of the plane.  Total curvature is invariant under even congruences 
and turned into its negative by odd congruences.  Thus the quantity $|\k|$ is 
invariant under all congruences.  It gives us a geometric property of the 
curve.  In particular we can express some of the curve's symmetries in terms of 
$\k$.  Let 
\begin{small}
\begin{eqnarray}
\mathcal{G}_{\k, T} 
\begin{pmatrix}
x(t) \\ y(t) 
\end{pmatrix} 
= \qquad \qquad \qquad \qquad \qquad \qquad \qquad \nonumber \\
\left(
\begin{pmatrix}
x(T) \\ y(T) 
\end{pmatrix}
-
R_{2\pi\k}
\begin{pmatrix}
x(0) \\ y(0)
\end{pmatrix}
\right)  + R_{2\pi\k}
\begin{pmatrix}
x(t) \\ y(t)
\end{pmatrix} 
\quad
\end{eqnarray}
\end{small}
When $\breve{\kappa} \in {\bf Z}$ the rotation $R_{2\pi\k}$ reduces to the 
identity map and $\mathcal{G}_{\k, T}$ is a translation by the vector 
$(x(T)-x(0),\, y(T)-y(0))^T$.  Otherwise $\mathcal{G}_{\k, T}$ is a 
rotation by $2\pi\k$ modulo $2\pi$ radians about the point 
\begin{equation*}
\begin{pmatrix}
\bar{x} \\ \bar{y}
\end{pmatrix} 
=
\frac{1}{2 \sin(\pi \k)} R_{\pi(1/2-\k)}
\begin{pmatrix}
x(T) - x(0) \\ y(T) - y(0)
\end{pmatrix} 
\end{equation*}

\begin{thm}  For all $t \in \real$ 
\begin{eqnarray*}
\mathcal{G}_{\k, T}
\begin{pmatrix}
x(t) \\ y(t) 
\end{pmatrix}
=
\begin{pmatrix}
x(t+T) \\ y(t+T) 
\end{pmatrix}
\end{eqnarray*}
\end{thm}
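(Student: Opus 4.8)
The plan is to start from the integral representation \eqref{E:eq1} applied at time $t+T$, split the integral $\int_0^{t+T}$ as $\int_0^{T} + \int_{T}^{t+T}$, and then change variables by $\tau \mapsto \tau + T$ in the second piece so that it runs over $[0,t]$. Since $v$ is $T$-periodic, the integrand's speed factor is unchanged, and the only thing that moves is the argument of the trigonometric functions: $\varphi(\tau+T)$ in place of $\varphi(\tau)$.

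Next I would invoke Lemma~\ref{T:varphi}, which says precisely that the velocity-direction vector at time $\tau+T$ is $R_{\overline{\kappa}T}$ applied to the velocity-direction vector at time $\tau$. Because $R_{\overline{\kappa}T}$ is a constant matrix it pulls outside the integral, and using $\overline{\kappa}T = 2\pi\breve{\kappa}$ (the definition of $\breve\kappa$) together with $\eqref{E:eq1}$ again to rewrite $\int_0^t v(\tau)(\cos\varphi(\tau),\sin\varphi(\tau))^T\,d\tau$ as $(x(t)-x(0),y(t)-y(0))^T$, I obtain
\begin{eqnarray*}
\begin{pmatrix} x(t+T) \\ y(t+T) \end{pmatrix}
= \begin{pmatrix} x(T) \\ y(T) \end{pmatrix}
+ R_{2\pi\breve\kappa}\left( \begin{pmatrix} x(t) \\ y(t) \end{pmatrix} - \begin{pmatrix} x(0) \\ y(0) \end{pmatrix} \right).
\end{eqnarray*}
Rearranging the right-hand side as $\bigl((x(T),y(T))^T - R_{2\pi\breve\kappa}(x(0),y(0))^T\bigr) + R_{2\pi\breve\kappa}(x(t),y(t))^T$ is exactly the definition of $\mathcal{G}_{\breve\kappa,T}(x(t),y(t))^T$, which finishes the argument.

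The only real bookkeeping hazard is the change-of-variables step: one must be careful that shifting $\tau$ by $T$ inside the integral over $[T,t+T]$ correctly produces the integral over $[0,t]$ of $v(\tau)(\cos\varphi(\tau+T),\sin\varphi(\tau+T))^T\,d\tau$, using $T$-periodicity of $v$ but emphatically \emph{not} of $\varphi$ — the non-periodicity of $\varphi$ is precisely what Lemma~\ref{T:varphi} is there to handle. I do not expect any genuine obstacle; once Lemma~\ref{T:varphi} is in hand the theorem is essentially a one-line consequence, and the supplementary remarks about $\mathcal{G}_{\breve\kappa,T}$ being a translation when $\breve\kappa\in\mathbf{Z}$ or a rotation about $(\bar x,\bar y)^T$ otherwise follow from standard facts about orientation-preserving isometries of the plane and do not need to be reproven here.
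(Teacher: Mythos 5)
Your proposal is correct and is essentially the paper's own argument run in the opposite direction: the paper applies $R_{2\pi\breve{\kappa}}$ to $(x(t),y(t))^T-(x(0),y(0))^T$ and unwinds it via Lemma~\ref{T:varphi}, the $T$-periodicity of $v$, and the shift-by-$T$ change of variables to land on $(x(t+T),y(t+T))^T-(x(T),y(T))^T$, whereas you start from $(x(t+T),y(t+T))^T$, split the integral at $T$, and use exactly the same three ingredients to arrive at the same identity. No substantive difference.
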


\begin{proof}
Moving $(x(0),\; y(0))^T$ from the right hand side of \eqref{E:eq1} to the left 
hand side and applying the rotation $R_{2\pi\k}$ to both sides gives
\begin{small}
\begin{eqnarray*}
R_{2\pi\k}
\left(
\begin{pmatrix}
x(t) \\ y(t) 
\end{pmatrix}
- 
\begin{pmatrix}
x(0) \\ y(0) 
\end{pmatrix}
\right) \qquad \qquad \\
\qquad =
R_{2\pi\k}
\int_0^t v(\tau) 
\begin{pmatrix}
\cos(\varphi(\tau)) \\ 
\sin(\varphi(\tau))
\end{pmatrix}
d\tau \\
  =
\int_0^t v(\tau) 
R_{2\pi\k}
\begin{pmatrix}
\cos(\varphi(\tau)) \\ 
\sin(\varphi(\tau))
\end{pmatrix}
d\tau 
\end{eqnarray*}
\end{small}
By Lemma \ref{T:varphi} 
\begin{small}
\begin{eqnarray*}
\int_0^t v(\tau) 
R_{2\pi\k}
\begin{pmatrix}
\cos(\varphi(\tau)) \\ 
\sin(\varphi(\tau))
\end{pmatrix}
d\tau \qquad \qquad \\
  =
\int_0^t v(\tau) 
\begin{pmatrix}
\cos(\varphi(\tau+T)) \\ 
\sin(\varphi(\tau+T))
\end{pmatrix}
d\tau
\end{eqnarray*}
\end{small}
and since $v(t)$ has period $T$
\begin{small}
\begin{eqnarray*}
\int_0^t v(\tau) 
\begin{pmatrix}
\cos(\varphi(\tau+T)) \\ 
\sin(\varphi(\tau+T))
\end{pmatrix}
d\tau \qquad \qquad \\
=
\int_0^t v(\tau+T) 
\begin{pmatrix}
\cos(\varphi(\tau+T)) \\ 
\sin(\varphi(\tau+T))
\end{pmatrix}
d\tau
\end{eqnarray*}
\end{small}
Using the change of variables theorem with $\eta = \tau + T$ gives
\begin{small}
\begin{eqnarray*}
\int_0^t v(\tau+T) 
\begin{pmatrix}
\cos(\varphi(\tau+T)) \\ 
\sin(\varphi(\tau+T))
\end{pmatrix}
d\tau = \qquad \qquad \qquad \qquad  \\
\int_0^{t+T} v(\eta) 
\begin{pmatrix}
\cos(\varphi(\eta)) \\ 
\sin(\varphi(\eta))
\end{pmatrix}
d\eta
-
\int_0^{T} v(\eta) 
\begin{pmatrix}
\cos(\varphi(\eta)) \\ 
\sin(\varphi(\eta))
\end{pmatrix}
d\eta \\
\end{eqnarray*}
\end{small}
Therefore
\begin{small}
\begin{equation*}
R_{2\pi\k}
\begin{pmatrix}
x(t) \\ y(t) 
\end{pmatrix}
- R_{2\pi\k}
\begin{pmatrix}
x(0) \\ y(0) 
\end{pmatrix}
= 
\begin{pmatrix}
x(t+T) \\ y(t+T) 
\end{pmatrix}
-
\begin{pmatrix}
x(T) \\ y(T) 
\end{pmatrix}
\end{equation*}
\end{small}
which can be rearranged to give the theorem.
\end{proof}

\begin{figure}[ht]
\centering
\includegraphics[scale=0.55]{./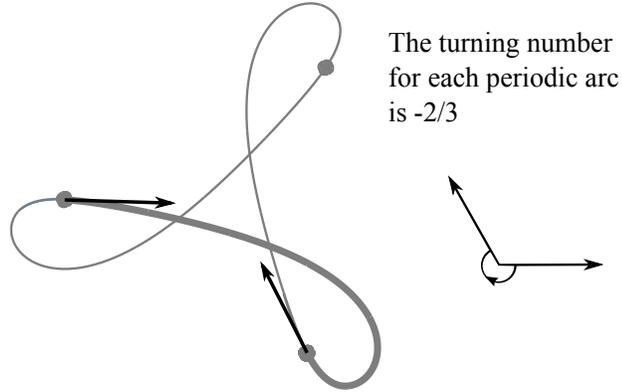}
\caption{A curve with $\k = -2/3$.  The curve is partitioned into three 
periodic arcs.  The tangent vector rotates by minus two thirds of a turn 
from the starting point of a periodic arc to its final point.  The tangent 
vector rotates by minus two whole turns along the entire closed curve so its
Whitney turning number is $-2$.  The curve has no reflectional symmetry and its 
full symmetry group is generated by $\mathcal{G}_{-2/3, T}$.}
\label{F:trefoil}
\end{figure}

   We can arbitrarily pick any point on the curve, $(x(t_1), \; y(t_1))^T$, and 
apply $\mathcal{G}_{\k, T}$ to it to get the point $(x(t_1+T), 
y(t_1+T))^T$.  The portion of the curve between these two points is a periodic 
arc.  We can apply $\mathcal{G}_{\k, T}$ to this periodic arc to get an 
adjacent periodic arc and so on.  We can apply the inverse of 
$\mathcal{G}_{\k,T}$ to get the rest of the curve on the other side of 
$(x(t_1), y(t_1))^T$.  In this way we can partition the curve into periodic 
arcs starting from any point.  A simple example is shown in figure 
\ref{F:trefoil}.

   The quantity $\k$ is like the Whitney turning number \cite{whitney} for 
closed curves except the Whitney turning number is a topological invariant 
whereas $\k$ is a geometric invariant.  Also the value of $\k$ can be any real 
number whereas the Whitney turning number must be an integer.  When $\k$ is a 
non-integral rational number $p/q$ ($p$, $q$ coprime) there is a close 
relationship between the two quantities.  A periodic arc will return to itself 
after being rotated $q$ times by $\mathcal{G}_{p/q, T}$.  This implies the 
image of the curve is closed and that the Whitney turning number of its image 
is $p$.  Conversely given a closed curve with Whitney turning number $p$ and 
rotational symmetry by $p/q$ of a turn the curve can be partitioned into $q$ 
arcs with total turning number $p/q$.  This is illustrated in figure 
\ref{F:trefoil}.

\section{Conservative examples}

   {\bf Example 1 - Epicyclic motion} \\ Epicyclic motion was an ancient Greek 
model for our solar system \footnote{Prominent figures in the development of 
this model were Apollonius, Hipparchus, and Ptolemy.  In Ptolemy's version the 
center was offset from the Earth but this did not change the shape of the curve 
in the Earth's rest frame.}.  It was a fair approximation for the motion of the 
planets as seen in the Earth's rest frame but of course it has long since been 
superseded by Heliocentric models. 

   Epicyclic motion is formed by combining two rotary motions.  A point on a
circle, called the {\it deferent}, revolves with a constant angular velocity
$\omega_1$.  At each moment in time the revolving point is the center for a 
another circle called the {\it epicycle} which spins about its center with 
constant angular velocity $\omega_2$.  A chosen point on the epicycle, that we 
will call the {\it tracing point}, stood for the location of a planet (see 
figure \ref{F:epicycle}).  The orbit of a planet was represented by the curve 
generated by the tracing point.  

\begin{figure}[ht]
\centering
\includegraphics[scale=0.42]{./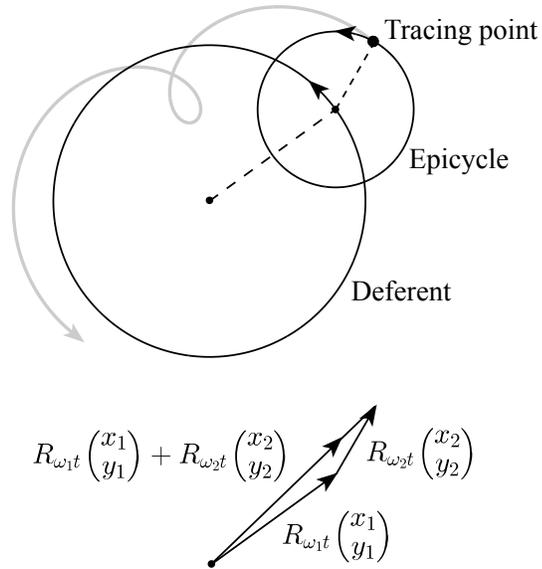}
\caption{Epicyclic motion is the spin of the epicycle as its center revolves
about the center of the deferent.  It can be expressed as the sum of two
vectors which turn with fixed angular velocities.}
\label{F:epicycle}
\end{figure}

   At any moment in time the location of the tracing point can be expressed as
the sum of two vectors: a position vector for the location of the center of the 
epicycle relative to the center of the deferent and a position vector for the 
location of the tracing point relative to the center of the epicycle.  We 
denote these two vectors at time $t=0$ as $(x_1,y_1)^T$ and $(x_2,y_2)^T$ 
respectively.  Figure \ref{F:epicycle} presents a simple time parametrization 
for curves traced out by epicyclic motion.

   Epicyclic motion can be generated by projecting the orbits for a pair of
uncoupled harmonic oscillators.  Harmonic oscillators are conservative systems.
In this case the dynamics can be obtained from the Hamiltonian
\begin{small}
\begin{equation*}
H(x_1,y_1,x_2,y_2) = - \frac{1}{2} \left(
 \omega_1 \left( x_1^2 + y_1^2 \right) + \omega_2 \left( x_2^2 + y_2^2 \right) 
                                                                       \right)
\end{equation*}
\end{small}
Because of the symmetry of this Hamiltonian it is arbitrary which pair of
variables, $x_1$, $x_2$ or $y_1$, $y_2$, are regarded as the positions
and which are regarded as the momenta.  The choice only affects the direction 
the system moves along the orbits.  We choose $x_1$, $x_2$ to be the position 
variables and $y_1$, $y_2$ to be conjugate momenta.  Each $(x_j,y_j)^T$ pair 
moves with angular frequency $\omega_j$ along a circle in the state space. 
   
   Projecting $(x_1,y_1,x_2,y_2)^T$ to the pair of position variables,
$(x_1,x_2)^T$, gives us Lissajous curves.  We will consider Lissajous curves 
only briefly.  This will occur in the following example on the spherical 
pendulum.  To obtain epicyclic motion from uncouple harmonic oscillators we 
instead project $(x_1,y_1,x_2,y_2)^T$ to $(x_1+y_1,\, x_2+y_2)^T$.

   We set $r_j = \left|\left| \, (x_j,y_j)^T \, \right|\right|$ for $j=1,2$.
For all $t \in \real$ we have 
\begin{equation*}
| r_1 - r_2 | \leq \left|\left|\, (x(t), y(t))^T \, \right|\right| \leq r_1+r_2
\end{equation*}
We call $| r_1 - r_2 |$ the {\it minimum radius} and $r_1+r_2$ the {\it maximum 
radius}.  The curve attains its \\
minimum radius when $R_{\omega_1 t} ((x_1, y_1)^T)$ and \\
$R_{\omega_2 t}((x_2, y_2)^T)$ point in opposite directions
and it attains its maximum radius when they point in the same direction.  By 
suitably shifting time we can suppose, without loss of generality, that at 
$t=0$ the vectors point in the same direction.  This simplifies the time 
parameterization of the curve to:
\begin{equation}\label{E:parameterization}
\begin{pmatrix}
x(t) \\ y(t)
\end{pmatrix}
=
R_{\omega_1 t} 
\begin{pmatrix}
r_1 \\ 0
\end{pmatrix}
+
R_{\omega_2 t} 
\begin{pmatrix}
r_2 \\ 0
\end{pmatrix}
\end{equation}

   If $\omega_1 = \omega_2$ then the vectors $R_{\omega_1 t}( r_1,0)^T$, \\
$R_{\omega_2 t}( r_2,0)^T$ will continue to point in the same direction and the
tracing point will travel in a circle.  Also if $\omega_1=0$ or $\omega_2=0$ 
the tracing point will travel in a circle so in this section we assume that
$\omega_1\omega_2(\omega_2-\omega_1) \neq 0$.  Furthermore, for the purpose of 
comparing these curves to the curves generate by spiral tip meander in section
\ref{S:dissipative}, we will assume $\omega_1>0$ which is to say the epicycle 
revolves anticlockwise.  

   Since $\omega_1 \neq \omega_2$ the vectors $R_{\omega_1 t}( r_1,0)^T$, \\
$R_{\omega_2 t}( r_2,0)^T$ will alternately point in the same and opposite 
directions.  The condition that they point in the same or opposite direction is 
equivalent to
\begin{small}
\begin{eqnarray*}
\left|\left| \, (\cos(\omega_1 t),\;\sin(\omega_1 t), \;0) \times
  (\cos(\omega_2 t),\;\sin(\omega_2 t), \;0) \, \right|\right| = \\
\sin((\omega_2-\omega_1)t)=0 \qquad \qquad \qquad \qquad
\end{eqnarray*}
\end{small}
Thus configurations for the deferent, epicycle, and tracing point which are
congruent to the initial condition occur with a periodicity of 
$T=2\pi/|\omega_2-\omega_1|$.  It can be checked that this is the common 
minimal period for $v(t)$ and $\kappa(t)$.  So we can apply the theory from the 
previous section to epicyclic motion.

   The curves generated by the tracing point are not properly called epicycles.
These curves have names based on a different construction method.  They can be 
constructed as roulettes, \ie by one curve rolling without slipping along 
another curve.  One of the simplest non-trivial roulettes is generated by a 
point on a disk rolling without slipping along a line.  These are called 
{\it trochoids}.  

   The curves generated by epicyclic motion can be generated by a circular disc 
rolling without slipping along another circular disc.  This construction method 
was perhaps originally conceived of by D\"urer in 1525 and then again by the 
astronomer R{\o}mer in 1624.  These curves have been studied by many 
mathematicians since.  

   For some purposes it is useful to allow the tracing point to be outside of 
the rolling disc.  We can treat the union of the rolling disc and the tracing 
point as a single rigid body even when their union does not form a connected 
set.  This is done by applying the same motion of the rolling disc to the 
tracing point regardless of where the tracing point happens to be.  And 
since we are allowing the tracing point to be outside of the rolling disc, we 
can dispense with the disc's interior in the definitions and just work with a
pair of circles, one fixed and one rolling along other.   

   The pair of circles are required to intersect in exactly one point.  When 
they intersect in exactly one point they have the same tangent line at the 
contact point, hence the circles are said to be tangent to each other.  If 
neither circle is inside the other then they are said to be externally tangent. 
Otherwise they are said to be internally tangent.  If the fixed and rolling
circles are externally tangent then the curve generated by the tracing point is
called an {\it epitrochoid}.  If the fixed circle is outside of the rolling
circle then the curve is called a {\it hypotrochoid}.  If the fixed circle is 
inside of the rolling circle then the curve is called a {\it peritrochoid}.

   If the tracing point is inside of the rolling circle then the hypotrochoid, 
epitrochoid, or peritrochoid is said to be {\it curtate}.  If the tracing 
point is outside of the rolling circle then the hypotrochoid, epitrochoid, or 
peritrochoid is said to be {\it prolate}\,\footnote{Some authors reverse the 
meaning of curtate and prolate, \eg \cite{ganguli}}.  If the tracing point is 
on the rolling circle then the hypotrochoid, epitrochoid, or peritrochoid is 
called a {\it hypocycloid}, {\it epicycloid}, or {\it pericycloid} respectively

\begin{figure}[ht]
\centering
\includegraphics[scale=0.275]{./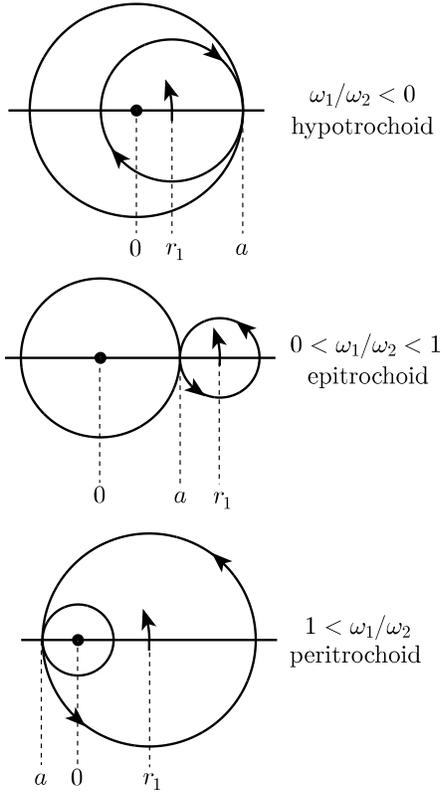}
\caption{Three configurations for the fixed and rolling circles.  We assume the 
rolling circles revolve anti-clockwise as indicated by the arrows at their 
centers.  This causes the rolling circle to turn clockwise for hypotrochoids 
and anti-clockwise for epitrochoids and peritrochoids as indicated by the 
arrows on the rolling circles.} 
\label{F:TrochoidTypes}
\end{figure}

   Its useful to have one term which encompasses hypotrochoids, epitrochoids, 
and peritrochoids.  Morely recognized this in 1894 and proposed using just 
``trochoids'' \cite{morely} even though this term is often restricted to 
the case where a circle rolls along a line.  More recently the term ``centered 
trochoid'' has been proposed \cite{ferreol}.  We shall use {\it central
trochoid} and use the term {\it central cycloid} to denote a hypocycloid, 
epicycloid, or pericycloid.  The central cycloids are those central trochoids 
that have cusps.

   Although the curves generated by epicyclic motion are central trochoids the 
deferent and epicycle are generally not the same as the fixed and rolling 
circles in the roulette construction.  However the fixed circle can be 
concentric with the deferent and at any moment in time the rolling circle can 
be concentric with the epicycle.  Only the radii may need to be different.

   To determine the correct radii for the fixed and rolling circles we make use
of the no slip condition.  The rolling circle may not slip as it goes around 
the fixed circle.  This means that the instantaneous velocity of the contact 
point is the zero vector $(0,0)^T$.  For epicyclic motion there is exactly one 
point, at any given moment in time, that is instantaneously at rest so this 
must be where the contact point for the fixed and rolling circles are at that 
moment.
   
   The contact point of any pair of tangent circles is collinear with their 
centers and by convention their centers are on the $x$-axis at $t=0$ so at this 
time the contact point is on the $x$-axis as well.  We let $a$ denote the 
$x$-coordinate of the contact point at time $t=0$.  So $|a|$ is the radius of 
the fixed circle.  

  At $t=0$ the vector $(r_1,0)^T$ is rotating with angular velocity $\omega_1$ 
about $(0,0)^T$ while the vector $(a,0)^T-(r_1,0)^T$ is rotating with angular
velocity $\omega_2$ about $(r_1,0)^T$.  Taking the derivative with respect to
time, evaluating at $t=0$, and setting the result equal to the zero vector 
gives
\begin{small}
\begin{equation*}
\begin{pmatrix}
0 \\ 0
\end{pmatrix}
=
\omega_1 R_{\pi/2}
\begin{pmatrix}
r_1 \\ 0
\end{pmatrix}
+
\omega_2 R_{\pi/2}
\left(
\begin{pmatrix}
a \\ 0
\end{pmatrix}
-
\begin{pmatrix}
r_1 \\ 0
\end{pmatrix}
\right)
\end{equation*}
\end{small}
which has the unique solution $a = (1-\omega_1/\omega_2)r_1$ (since $\omega_2 
\neq 0$).  To get the radius of the rolling wheel we set $b=a-r_1$,
\ie the directed distance from the center of the rolling circle to the contact 
point.  The radius of the rolling circle is $|b|$.

   The center of the fixed circle is at $(0,0)^T$ and at $t=0$ the center of 
the rolling circle is at $(r_1,0)^T$.  If $\omega_1/\omega_2<0$ then $r_1<a$, 
the contact point is on the right hand side of both circles, and the fixed 
circle is outside of the rolling circle.  So the curve is a hypotrochoid (see 
figure \ref{F:TrochoidTypes}).  If $0<\omega_1/\omega_2<1$ then $0<a<r_1$, the 
contact point is between the circles, and the circles are externally tangent.  
So the curve is an epitrochoid.  If $1< \omega_1/\omega_2$ then $a<0$, the 
contact point is on the left hand side of both circles, and the fixed circle is 
inside of the rolling circle.  So the curve is a peritrochoid.  

   From these facts the usual parameterizations for the central trochoids in 
terms of the radii $|a|$, $|b|$ and the angle $\phi=\omega_1 t$ can easily be 
derived.  It is convenient to work with \eqref{E:parameterization} because each
of the epicyclic parameters occurs just once in the expression.  It is helpful 
to keep in mind, though, the different ways that central trochoids can be 
constructed, \eg either by epicyclic motion or as a roulette.

\begin{figure}[ht]
\centering
\includegraphics[scale=0.5]{./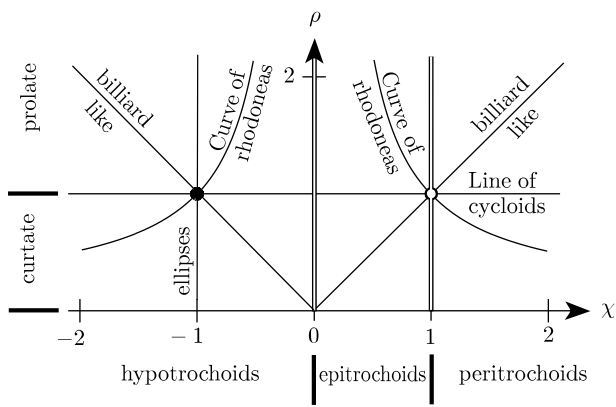}
\caption{The spaces of hypotrochoids, $\overline{\mathcal{H}}$, of 
epitrochoids, $\overline{\mathcal{E}}$, and of peritrochoids, 
$\overline{\mathcal{P}}$.  Along with the four distinguished cases, central
cycloids, ellipses, billiard like, and rhodoneas.}
\label{F:SpaceT}
\end{figure}

   We are primarily concerned with the shape of the central trochoids rather 
than their position, orientation, or size.  So we will determine the space of 
similarity classes of central trochoids.  For this purpose we introduce two 
geometrically invariant parameters for the central trochoids.

   One commonly used geometric invariant for describing the shape of a central 
trochoid is $ \rho = |\omega_2/\omega_1| r_2/r_1$.  This is often called the 
{\it arm ratio}.  This is because the line segment connecting the center of 
the rolling circle to the tracing point is often called the {\it arm} and 
$\rho$ is the ratio of the arm's length to the radius of the rolling circle, 
$r_2/|b|=\rho$.  When $\rho=0$ the resulting curve is just a single point if 
$\omega_1=0$ and a circle otherwise.  We do not wish to regard these as 
special cases of central trochoids so we require $\rho>0$.  When $0< \rho <1$ 
the central trochoid is curtate, when $\rho=1$ the central trochoid is a 
central cycloid, and when $\rho>1$ the central trochoid is prolate.

   Another geometrical invariant that we will use is the ratio of angular 
velocities $\chi = \omega_1/\omega_2$ (recall $\omega_2\neq 0$).  We call 
$\chi$ the {\it turning ratio}.  It can also be expressed as $\chi = 
(b/a)/(b/a-1)$.  The quantity $|b/a|$ is often called the {\it wheel ratio}.  
We will see that for curtate central trochoids the wheel ratio equals $\k$ (for
$\omega_1>0$).  

   Even with $\rho>0$ if $\chi=0,1$ then the curve that is traced out is a 
circle which we do not wish to include as a type of central trochoid.  When
$\chi<0$ the central trochoid is a hypocycloid, when $0<\chi<1$ the central
trochoid is an epitrochoid, and when $1<\chi$ the central trochoid is a
peritrochoid (see figure \ref{F:SpaceT}).  We set
\begin{eqnarray*}
\overline{\mathcal{H}} &=& \{\; (\chi,\rho)\; |\; \chi < 0,\; \rho > 0\; \} \\
\overline{\mathcal{E}} &=& \{\; (\chi,\rho)\; |\; 0 < \chi < 1,\; \rho>0\;\} \\
\overline{\mathcal{P}} &=& \{\; (\chi,\rho)\; |\; 1 < \chi,\; \rho > 0\; \}
\end{eqnarray*}
and $\overline{\mathcal{T}} = \overline{\mathcal{H}} \cup 
\overline{\mathcal{E}} \cup \overline{\mathcal{P}}$.

   In addition to the central cycloids there are three other cases of central 
trochoids worth distinguishing.  These cases also form curves in 
$\overline{\mathcal{T}}$ as shown in figure \ref{F:SpaceT}.  The first of these 
cases is given by the vertical line $\chi=-1$.  The equation for the 
hypotrochoid reduces to
\begin{equation*}
\begin{pmatrix}
x(t) \\ y(t) 
\end{pmatrix}
=
\begin{pmatrix}
(r_1+r_2) \cos(\omega_1 t) \\ (r_1-r_2) \sin(\omega_1 t) 
\end{pmatrix}
\end{equation*}
which determines an ellipse with semi-major axis $r_1+r_2$ and semi-minor axis
$|r_1-r_2|$.

   We call the cases given by the diagonal lines $\rho = |\chi|$ ``billiard 
like''.  These curves have long arcs with low curvature alternating with short 
arcs with high curvature.  They are fairly well approximated by the paths made 
by a frictionless billiard ball rolling on a circular table which travels along
straight line segments and bounce at the table's edge.

   The remaining distinguished case of central trochoids corresponds to the 
central trochoid passing through its own center of symmetry.  The minimum 
radius of a central trochoid is zero if and only if the deferent and epicycle 
have the same size.  When $r_1=r_2$ the distance of the tracing point from the 
center of the curve, as a function of time, is 
\begin{equation*}
                  2r_1\cos(((\omega_1-\omega_2)/2) \; t)
\end{equation*}
which is essentially the defining condition for ``rhodonea'' curves
\footnote{Studied by Guido Grandi around 1723}.  The condition $r_1=r_2$ 
is equivalent to $\rho\,|\chi| = 1$ which specifies a pair of hyperbolic arcs 
in $\overline{\mathcal{T}}$ (see figure \ref{F:SpaceT}).  These hyperbolic arcs 
form the subspace of rhodonea curves.  Below the hyperbolic arcs the deferent 
is larger than the epicycle while above the hyperbolic arcs the deferent is 
smaller than the epicycle.

\begin{table}
\centering
\renewcommand{\arraystretch}{1.8}
\begin{tabular}{|c||c|c|c|}
\hline
$\k$ & hypo- & epi- & peri- \\
\hline
\hline
prolate &  $\frac{1}{\chi-1}$ & $\frac{1}{1-\chi}$  & $\frac{1}{\chi-1}$ \\
\hline
curtate &  $\frac{\chi}{\chi-1}$ & $\frac{\chi}{1-\chi}$  & $\frac{\chi}{\chi-1}$ \\
\hline
\end{tabular}
\caption{The value of $\k$ for central trochoids when the rolling wheel 
revolves in the anticlockwise direction, $\omega_1>0$.  For $\omega_1<0$ take 
the negative of each table entry.}
\renewcommand{\arraystretch}{1}
\label{tbl:kformulas}
\end{table}

\begin{figure}[ht]
\centering
\includegraphics[scale=0.33]{./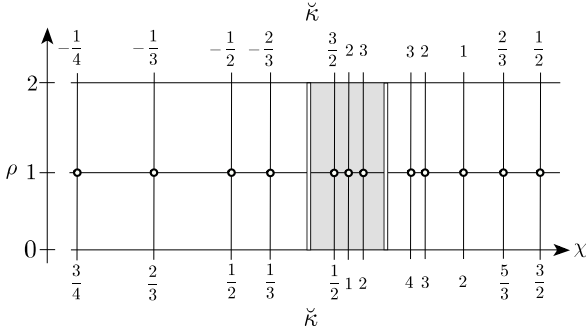}
\caption{The isogonal curves in $\overline{\mathcal{T}}$ are vertical line 
segments.  They are labeled with their value for $\k$.  The value of $\k$ 
changes by $\pm 1$ when the line of cycloids is crossed.  The region in 
$\overline{\mathcal{E}}$ is shaded to facilitate comparison to figure 
\ref{F:SpaceT}.}
\label{F:isogonal}
\end{figure}

   In $\overline{\mathcal{H}}$ the sets of hypocycloids, ellipses, billiard
like curves, and rhodoneas all intersect at one point $(\chi,\rho)=(-1,1)$.  
This particular special case is known as the ``Tusi couple''\footnote{Studied 
by Nasir al-Din al-Tusi around 1247}.  For the Tusi couple the tracing point 
goes back and forth along a line segment.  For the Tusi couple the tangent 
vector is undefined at the end points of the line segment, as with the cusps
of central cycloids.  The line segment can be regarded as a degenerate ellipse.
The line segment is literally a billiard curve for a round table so we can say 
it is a billiard like curve that passes through its center of symmetry.  

   The Tusi couple is the only instance in which the curvature of a central 
trochoid is zero at any point.  Hence central trochoids do not have inflection 
points.

   The value of $\k$ for central trochoids can be computed from equation 
\eqref{E:kappabreve}.  It almost reduces to an uncomplicated function of $\chi$ 
except that it depends on the signs of $\omega_1$, $\rho - 1$, and 
$\chi(\chi - 1)$.  The formulas for $\k$ are shown in table 
\ref{tbl:kformulas}.  For any real number except zero there is a central 
trochoid whose value for $\k$ is the given real number.

   Combining $\chi = (b/a)/(b/a-1)$ with $\k = \chi/(\chi-1)$ gives $\k = b/a$.
So we see from table \ref{tbl:kformulas} that $\k = b/a$ for curtate
hypotrochoids and peritrochoids.  For curtate epitrochoids $\k = -b/a$.  The
signs of $a$, $b$ are the same for hypotrochoids and peritrochoids and opposite
for curtate epitrochoids.  Therefore the total turning number for a periodic 
arc of a curtate central trochoid is the same as the wheel ratio, \ie $\k = 
|b/a|$.  

   The isogonal curves in $\overline{\mathcal{T}}$ are vertical line segments 
which span the heights of the curtate and prolate regions.  When the line of 
central cycloids is crossed the value of $\k$ changes by $\pm 1$ (see figure 
\ref{F:isogonal}).  The central trochoids undergo a cusp transition.  In this 
transition a loop is either added or removed from each periodic arc of the 
central trochoid.

   A sufficient, but not necessary, condition for two central trochoids to be 
similar is for them to have the same values for $\chi$ and $\rho$.  We will 
obtain the spaces of similarity classes of central trochoids by quotienting 
each of the spaces $\overline{\mathcal{H}}$, $\overline{\mathcal{E}}$, and 
$\overline{\mathcal{P}}$.

\begin{figure}[ht]
\centering
\includegraphics[scale=0.45]{./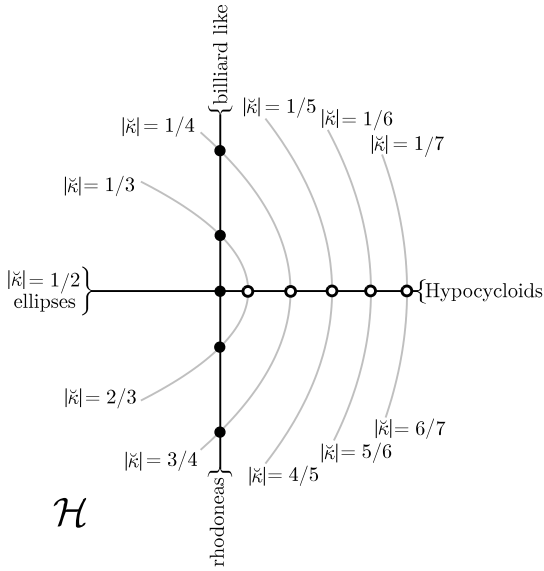}
\caption{The space, $\mathcal{H}$, of similarity classes of Hypotrochoids.  The
origin corresponds to the Tusi couple.  The four compass directions 
correspond to the four distinguished cases.  The isogonal curves are shown in 
gray and they are labeled with their $|\k|$ values.  The value of $|\k|$ jumps 
as an isogonal curve passes through the line of hypocycloids.}
\label{F:SpaceH}
\end{figure}

\begin{figure}[ht]
\centering
\includegraphics[scale=0.45]{./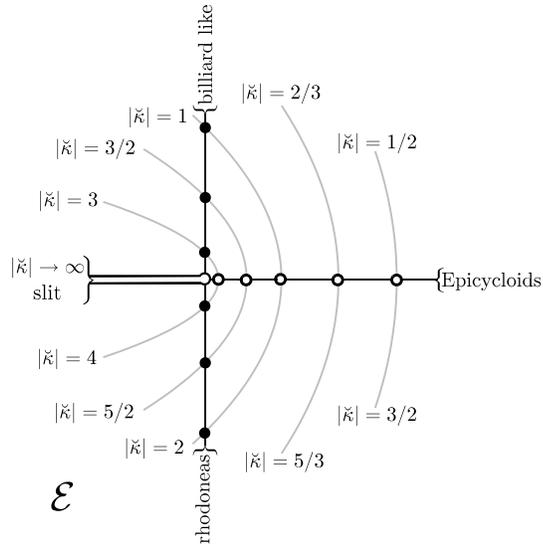}
\caption{The space, $\mathcal{E}$, of similarity classes of Epitrochoids.  The 
negative real axis and the origin make up the complement of $\mathcal{E}$ in
${\bf C}$.  The other three compass directions correspond to the three
distinguished cases of Epitrochoids.  The isogonal curves are shown in gray and 
they are labeled with their $|\k|$ values.  The value of $|\k|$ jumps as the 
isogonal curve passes through the line of epicycloids.}
\label{F:SpaceE}
\end{figure}

   In models for the planetary motion the epicycle was generally much smaller 
than the deferent.  Actually though, because vector addition is commutative, it 
doesn't matter which of the two circles we take to be the deferent and 
epicycle.  If we let the center of the deferent revolve about the center of the 
epicycle with angular velocity $\omega_2$ while the deferent spins with angular 
velocity $\omega_1$ then the exact same curve can be traced out.  However when 
we look for fixed and rolling circles to generate the central trochoid we get 
a different location for the point at instantaneous rest and thus different 
radii for the fixed and rolling circles.  This fact is known as David 
Bernoulli's ``dual generation theorem''.  
   
   A consequence of the dual generation theorem is that the ordered pair 
$(1/\chi,1/\rho)$ determines the same similarity class of central trochoids as 
$(\chi,\rho)$.  A curtate hypotrochoid is geometrically similar to a prolate 
hypotrochoid and {\it visa verse}.  A prolate epitrochoid is similar to a 
curtate peritrochoid and {\it visa versa}.  A curtate epitrochoid is similar to 
a prolate peritrochoid and {\it visa versa}.  

   The terms ``hypotrochoid'', ``epitrochoid'', ``peritrochoid'',
``curtate'', and ``prolate'' do not designate geometric similarity classes of
curves but merely describe how the curves can be constructed geometrically.
Many geometric figures can be constructed in more than one way.  This
terminology can lead to some confusion particularly since peritrochoids have
been regarded as epitrochoids by some authors while other authors have regarded
them as hypotrochoids.  Willson gives an overview on this terminology in the
appendix to his 1898 book \cite{willson}.

   Every central trochoid, except for central cycloids, can be constructed as a 
curtate central trochoid and as a prolate central trochoid.  The value of 
$|\k|$ equals the wheel ratio in the curtate method of construction.  We can
think of $\k$ as a generalization of the wheel ratio to other types of curves
with periodically varying curvature.  Although $\k$ can equal $0$ for some
curves with periodically varying curvature but not for central trochoids.

   Suppose we have two central trochoids with parameters $\omega_1$, 
$\omega_2$, $r_1$, $r_2$ and $\omega_1'$, $\omega_2'$, $r_1'$, $r_2'$.  A 
necessary condition for two central trochoids to be similar is for the ratio of 
their minimum radius to their maximum radius be the same, \ie
\begin{equation*}
\frac{|r_1-r_2|}{r_1+r_2} = \frac{|r_1'-r_2'|}{r_1'+r_2'}
\end{equation*}
Since the radii are positive this equation is equivalent to the equation
\begin{equation*}
\left(\frac{r_1'}{r_2'} - \frac{r_1}{r_2}\right)
\left(\frac{r_1'}{r_2'} - \frac{r_2}{r_1}\right) =0
\end{equation*}
If $r_1'/r_2'=r_1/r_2$ then its necessary for $\omega_1'/\omega_2'= 
\omega_1/\omega_2$ and if $r_1'/r_2'=r_2/r_1$ then its necessary for 
$\omega_1'/\omega_2'= \omega_2/\omega_1$.  Therefore the only points in 
$\overline{\mathcal{T}}$ that correspond to the same similarity class as
$(\chi,\rho)$ is $(1/\chi,1/\rho)$.

   We denote the space of similarity classes of hypotrochoids by 
${\mathcal{H}}$.  To denote the members of ${\mathcal{H}}$ we use the term 
{\it Hypotrochoid} with the first letter capitalized.  The word
``hypotrochoid'' with all lower case letters describes how the curve was
constructed.  For $(\chi,\rho) \in {\overline{\mathcal{H}}}$ the function 
\begin{equation*}
(\chi,\rho) \mapsto (\log(-\chi) + i \log(\rho))^2
\end{equation*}
is onto the complex plane, ${\bf C}$, and it is two to one everywhere except at
the Tusi couple.  The point with the same image as $(\chi,\rho)$ is 
$(1/\chi,1/\rho)$ so we can identify ${\mathcal{H}}$ with ${\bf C}$ (see 
figure \ref{F:SpaceH}).

   The value of $\chi$ is negative for hypotrochoids so regardless of whether 
it is prolate or curtate $-1< \k < 1$.  The projection from 
$\overline{\mathcal{H}}$ to ${\mathcal{H}}$ maps the isogonal curves with
$\k = \pm 1/2$ to the line of ellipses.  Otherwise it maps pairs of isogonal 
curves with opposite sign to the same semiparabolic arc in ${\mathcal{H}}$.  We 
can associate the value of $|\k|$ to each semiparabolic arc.  The semiparabolic
arcs whose values for $|\k|$ sum to $1$ form the entire parabola except for its 
vertex on the line of Hypocycloids.

   We denote the space of similarity classes of epitrochoids by 
${\mathcal{E}}$.  For members of ${\mathcal{E}}$ we use the term 
{\it Epitrochoid} with the first letter capitalized.  The words ``epitrochoid'' 
and ``peritrochoid'' with all lower case letters describes how the curve was 
constructed.  For $(\chi,\rho) \in {\overline{\mathcal{E}}} \cup 
{\overline{\mathcal{P}}}$ the function 
\begin{equation*}
(\chi,\rho) \mapsto (\log(\chi) + i \log(\rho))^2
\end{equation*}
is onto the complex plane ${\bf C}$ minus the non-positive real axis.  It is 
two to one everywhere in ${\overline{\mathcal{E}}} \cup 
{\overline{\mathcal{P}}}$.  The point with the same image as $(\chi,\rho)$ is 
$(1/\chi,1/\rho)$ so we can identify ${\mathcal{E}}$ with the slitted complex 
plane (see figure \ref{F:SpaceE}).

   The value of $\chi$ is positive for an epitrochoid or peritrochoid (for 
$\omega_1>0$) so regardless of whether it is prolate or curtate $\k > 0$.  The
projection from $\overline{\mathcal{E}}$ to ${\mathcal{E}}$ maps pairs of 
isogonal curves with the same $\k$ value to the same semiparabolic arc in 
${\mathcal{E}}$.  To obviate the issue of how the Epitrochoids are 
parameterized we associate the value of $|\k|$ to each semiparabolic arc.  If 
the difference in $|\k|$ between a semiparabolic arc in the lower half-plane 
of $\mathcal{E}$ and a semiparabolic arc in the upper half-plane of 
$\mathcal{E}$ is $1$ then their union is the entire parabola except for its 
vertex on the line of Epicycloids.

\smallskip

\noindent
{\bf Example 2 - The spherical pendulum} \\ The spherical pendulum is an
idealized mechanical system.  It is comprised of a weightless inextensible {\it
rod}, essentially a line segment with length $\ell$.  One end of the rod, the
{\it pivot}, is motionless for all time.  The other end, the {\it bob}, is the 
location of a point particle with mass $m$.  The bob is constrained to move 
on a sphere of radius $\ell$ centered at the pivot while subjected to a uniform 
gravitational field with strength $g$.

   We take the pivot to be the origin of a Cartesian coordinate system for the
spherical pendulum.  We take the direction of the gravitational field to be the
negative direction of the $z$-axis.  The $z$-axis is also called the {\it
pendulum's axis}.  The plane through the pivot and orthogonal to the pendulum's
axis is the {\it support plane}.  The support plane contains the $(x,y)$-axes
of the coordinate system (see figure \ref{F:spherical}).  Because the spherical
pendulum is symmetrical about its axis the orientation of the $(x,y)$-axes is 
completely arbitrary.

\begin{figure}[ht]
\centering
\includegraphics[scale=1.0]{./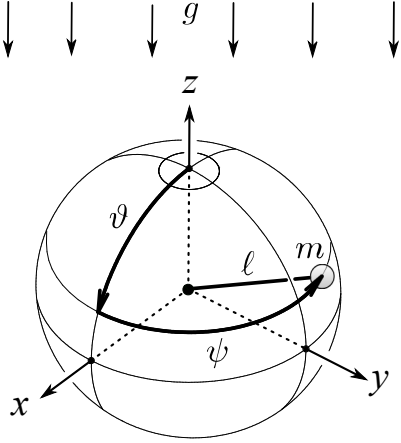}
\caption{Cartesian coordinates, $(x,y,z)$, and spherical coordinates, 
$(\vartheta, \psi)$ for the bob of a spherical pendulum. They are related by
\qquad \qquad \qquad \qquad \qquad \qquad \qquad \qquad $(x,y,z) = \ell \; 
(\sin(\vartheta)\cos(\psi),  \, \sin(\vartheta)\sin(\psi), \, 
\cos(\vartheta) )$.}
\label{F:spherical}
\end{figure}

   We also use spherical coordinates to specify the position of the bob.  
Because the letters $\theta$, $\varphi$ are used through out this article to 
describe the direction of the velocity along planar curves we let $\vartheta$ 
stand for the polar angle of the spherical coordinate system and $\psi$ stand 
for the azimuthal angle (see figure \ref{F:spherical}).

   The theory presented in section \ref{S:kappabreve} can be generalized to
curves on a sphere using the concept of geodesic curvature but to avoid
unnecessary complications in this example we will study the path of the bob 
from a bird's eye view.  More precisely stated we orthogonally project the 
bob's path on the sphere into the support plane.  We treat the spherical 
pendulum as merely a mechanism for generating planar curves which are the 
objects of our study here.

   We present a short review of the spherical pendulum's dynamics.  The bob 
moves in three dimensional space but it is subject to a single holonomic 
constraint so it has two degrees of freedom.  Two independent integrals of 
motion are the total energy or Hamiltonian, $H$ and the vertical component of 
its angular momentum, $J$.  The system is fully integrable.  Its orbits are 
either quasiperiodic, periodic, or fixed points.  

   The projection of the bob's position to the pendulum's axis is always 
periodic.  When the bob's $z$-coordinate has a minimal period it is sometimes 
referred to as the {\it pendulum's period} even if the pendulum is behaving 
quasiperiodically.  So long as $J \neq 0$ the bob's $(x,y)$ coordinates will be 
rotated about the pendulum's axis by a nonzero amount during the pendulum's 
period.  Typically the angle for this rotation is irrational so the overall 
motion and its projection to the support plane is quasiperiodic.  It will be 
shown that the pendulum's period is the common minimal period of $v(t)$, 
$\kappa(t)$ for the $(x,y)$-curves so we can apply the theory from section 
\ref{S:kappabreve}.

   The kinetic energy of the bob is
\begin{equation*}
\frac{1}{2} m (\dot{x}^2 + \dot{y}^2 + \dot{z}^2) = 
\frac{1}{2} m \ell^2 (\dot{\vartheta}^2 + \sin^2(\vartheta) \dot{\psi}^2 ) 
\end{equation*}
and the potential energy is $U = mgz = mg\ell\cos(\vartheta)$.  Since $U$
is independent of $\dot{\vartheta}$ and $\dot{\psi}$ the conjugate momenta are
\begin{small}
\begin{eqnarray*}
P_{\vartheta} &=& \frac{\partial}{\partial \dot\vartheta} \; \frac{1}{2} m 
\ell^2 
(\dot{\vartheta}^2 + \sin^2(\vartheta) \dot{\psi}^2 ) = m \ell^2 
\dot{\vartheta} \\
J &=& \frac{\partial}{\partial \dot\psi} \; \frac{1}{2} m \ell^2 
(\dot{\vartheta}^2 + \sin^2(\vartheta) \dot{\psi}^2 ) = m \ell^2 
\sin^2(\vartheta) \dot{\psi} 
\end{eqnarray*}
\end{small}
These are the horizontal and vertical components of the bob's total angular 
momentum respectively.  The state of the spherical pendulum is completely 
specified by the canonical variables $(\vartheta,P_{\vartheta},\psi,J)$.  The
Hamiltonian is
\begin{small}
\begin{equation} \label{E:hamiltonian}
H(\vartheta,P_{\vartheta},\psi,J) 
= \frac{1}{2 m \ell^2} \left( 
P_{\vartheta}^2 + \frac{J^2}{\sin^2(\vartheta)}\right) + m g \ell 
\cos(\vartheta)
\end{equation}
\end{small}

   The fact that $H$ is independent of $\psi$ shows us that the value of $J$ is
constant.  Although the state space is four dimensional the dynamics can be
reduced to two dimensions because $H$, $J$ are constant.  Moreover the reduced
system has a nondimensionalized form.  Physically, the parameters, $m$, $g$, 
$\ell$, are limited to positive values and varying them does not produce any 
qualitative changes in behavior so long as they remain positive.  To obtain the 
reduced system we nondimensionalized the constants of motion, we define a 
dimensionless potential energy and its dimensionless rate of change, and we
define a dimensionless time:
\begin{eqnarray*}
(h,j)    &=& \left(\frac{H}{mg\ell}, \; \frac{J}{m\ell\sqrt{g\ell}}\right) \\
(u, w)^T &=& \left(\frac{U}{mg\ell}, \; \sqrt{\frac{\ell}{g}} \; 
             \dot{u}\right)^T \\
\mathfrak{t} &=& \sqrt{g/\ell} \; t
\end{eqnarray*}
The equations of motion which can be obtained from the Hamiltonian 
\eqref{E:hamiltonian} can be used to show that $(u,w)^T$ satisfies the 
differential equation
\begin{equation} \label{E:reducedeq}
\begin{pmatrix}
d u/d \mathfrak{t} \\
d w/d \mathfrak{t} 
\end{pmatrix}
=
\begin{pmatrix}
w \\
3 u^2 - 2 h u - 1 
\end{pmatrix}
\end{equation}
This is known as the {\it reduced system} for the spherical pendulum.  Its is 
not a straight forward initial value problem.  First of all the initial value 
for $u$ must be in the interval $[-1,1]$.  Secondly it follows from 
\eqref{E:hamiltonian} that 
\begin{equation} \label{E:weq}
h = \frac{1}{2} \frac{w^2 + j^2}{1-u^2} + u
\end{equation}
Although $(u, w)^T$ varies with time the value of $h$ depends on $(u, w)^T$ in 
such a way that it does not change with time.  The value of $h$ can be 
determined by \eqref{E:weq} from the initial value for $(u, w)^T$ and the 
constant value for $j$.  Once the value of $h$ has been determined from the
initial conditions it can be treated as a fixed parameter in 
\eqref{E:reducedeq}.

   The $(x,y)$-curves can be obtained from a solution for $u$ by using a single 
quadrature.
\begin{eqnarray}\label{E:xycurves}
\psi(t) &=& \psi(0) + j \int_0^{\sqrt{g/\ell}\; t} \frac{d\tau}{1-u(\tau)^2} 
\nonumber \\
\begin{pmatrix}
x(t) \\ y(t) 
\end{pmatrix}
&=& \ell \sqrt{1-u(t)^2}
\begin{pmatrix}
\cos(\psi(t)) \\ \sin(\psi(t))
\end{pmatrix}
\end{eqnarray}

\begin{figure}[ht]
\centering
\includegraphics[scale=0.370]{./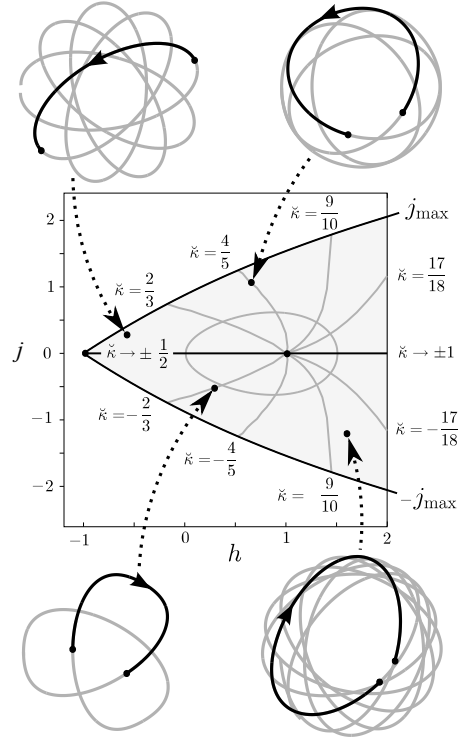}
\caption{The energy-momentum space, $\mathscr{P}$, for the spherical pendulum
is the light gray region bounded by the black $\pm j_{\max}$ curves.  The 
isogonal curves, shown in dark gray, are labeled with their values for $\k$.  
They radiate from $(1,0)$.  The dark gray oval, $27 j^2 = 2h(9-4h^2)$, is where 
inflection points partition the corresponding $(x,y)$-curves into periodic 
arcs.  Outside of the oval the $(x,y)$-curves do not have inflection points.
Insets show $(x,y)$-curves in gray each with a periodic arc shown in black.  
Dotted arrows point to the corresponding $(h,j)$ values.}
\label{F:hjspace}
\end{figure}

   The maximum potential energy is attained when the bob is directly above the 
pivot and the minimum potential energy is attained when the bob is directly 
below the pivot.  If the bob has no kinetic energy when its directly above or
below the pivot then it will remain where it is.  These two states are the 
fixed points of the Hamiltonian system \eqref{E:hamiltonian}.  If $h = -1$ then 
the bob must be motionless directly below the pivot.  This is the minimum 
possible value for $h$.  There is no limit to how fast the bob can move so $h$ 
has no upper bound.

   For each $h \geq -1$ there is a finite range of values for $j$.  The extreme 
values for $j$ can be obtained by rearranging \eqref{E:weq} to
\begin{equation}\label{E:lh}
-\frac{1}{2}\; j^2 = \frac{1}{2} \; w^2 - (h-u)(1-u^2) 
\end{equation}
and differentiating the right hand side with respect to $u$ and $w$ while 
treating $h$ as a fixed parameter.  There is one critical point for $j$ which
is $((h -\sqrt{h^2+3})/3,\; 0)^T$.  The maximum value for $j$ is
\begin{equation*}
j_{\max} = \frac{2}{9} \sqrt{ 3\; ((h^2+3)^{3/2} - h^3 + 9h) }
\end{equation*}
and the minimum value is $-j_{\max}$.  The graph of $j_{\max}$ is an 
increasing, concave down, curve with a single end point as shown in figure 
\ref{F:hjspace}.  The set of all possible values for $(h,j)$ is the {\it 
energy-momentum} space for the spherical pendulum,
\begin{equation*}
\mathscr{P} = \{\, (h,j) \, : \,h \geq -1, \, |j| \leq j_{\max} \}
\end{equation*}

   For $j = 0$ the angular velocity $\dot{\psi}$ is always zero and the 
pendulum moves within a vertical plane.  In this case it is often called a {\it
planar pendulum} even though there are no physical forces constraining it 
within a plane.  If $(h,j) = (-1,0)$ the $(x,y)$-curve is just a point.  If 
$(h,j) = (1,0)$ the bob can be directly above the pivot or it can move 
asymptotically towards that position.  For each $h$ such that $h > -1$ and $h 
\neq 1$ all of the $(x,y)$-curves corresponding to $(h,0)$ are line segments 
with the same length.

    The right hand side of \eqref{E:lh} can be taken as a Hamiltonian function 
for the reduced system \eqref{E:reducedeq}.  Thus the critical point of $j$ 
corresponds to a fixed point of the reduced system.  So the extreme values 
for $j$ are attained when $w=0$, \ie the potential energy is constant.  In 
these cases the height of the bob does not change.  This is often referred 
to as a {\it conical pendulum} because the rod sweeps out a cone.  The bob 
rotates along a horizontal circle with the sign of $j$ determining the 
direction of rotation.  For each $h > -1$ all of the $(x,y)$-curves determined 
by $(h,\; \pm j_{\max})$ are the same circle.

\begin{figure}[ht]
\centering
\includegraphics[scale=0.7]{./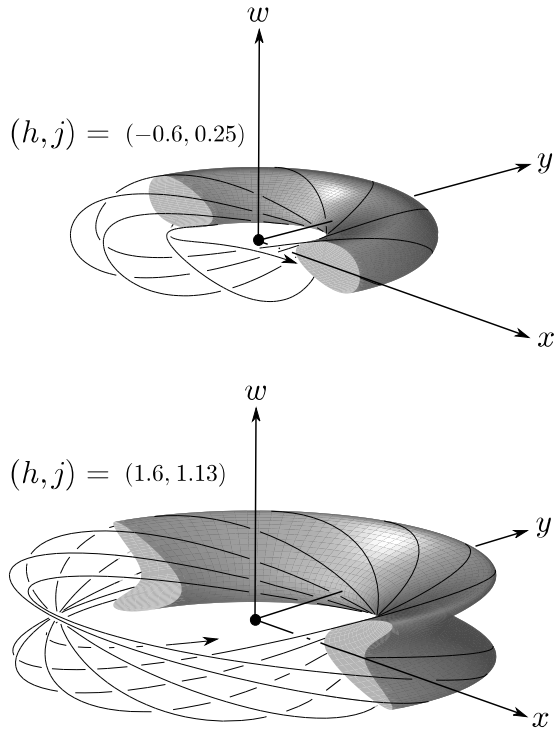}
\caption{The projection of the invariant tori corresponding to given $(h,j)$ 
values into the coordinate space for $(x,y,w)$.  The location of the $(h,j)$
values in $\mathscr{P}$ is shown in figure \ref{F:hjspace}.  Each torus shows 
the projection of a rotationally symmetric quasiperiodic orbit in the torus.}
\label{F:tori}
\end{figure}

   For each $(h,j)$ in 
\begin{equation*}
\mathscr{R} = \{\, (h,j) \, : \,h > -1, \, 0 < |j| < j_{\max} \}
\end{equation*}
the subset of the spherical pendulum's state space which is mapped to $(h,j)$
is a torus\footnote{Cushman's $\mathscr{R}$ space is a little larger than 
this.} \cite{cushman}.  The action of the group of rotations about the 
pendulum's axis can be extended to the whole state space of the pendulum and 
for each $(h,j) \in \mathscr{R}$ the corresponding torus is invariant under 
this action.  Rotations about the pendulum's axis have no effect on $w$ and 
these tori can be projected into a three dimensional space while preserving 
their symmetry by using the coordinates $(x,y,w)$ (see figure \ref{F:tori}).  
For each $(h,j) \in \mathscr{R}$ all of the corresponding $(x,y)$-curves are 
rotated copies of each other.  The orbits typically wind quasiperiodically 
around the torus but they can be periodic.  In either case the curvature of the 
$(x,y)$-curve varies periodically.  The value of $\k$ is defined for the 
corresponding $(x,y)$-curve if and only if $(h,j) \in \mathscr{R}$.

   Since the shape of the $(x,y)$-curves is completely determined by $(h,j) 
\in \mathscr{R}$ we can think of $\k$ as a function on the space $\mathscr{R}$.
We let $T$ denote the period of the reduced system in dimensionless time 
$\mathfrak{t}$.  $T$ depends on $(h,j)$.  In physical time $t$ the period is 
$\sqrt{\ell/g} \; T$.  We obtain the velocity of the $(x,y)$-curves from 
\eqref{E:xycurves} and 
\eqref{E:lh}.
\begin{equation*}
v = \sqrt{g\ell \, (2h -2u - w^2)}
\end{equation*}
The period of $2h -2u - w^2$ is the same as the period of the reduced system 
and since the expression under the radical is positive for all time for $(h,j)
\in \mathscr{R}$ the period of $v$ is the same as the period of the reduced 
system.  We also obtain the curvature from \eqref{E:xycurves} and \eqref{E:lh}.
\begin{equation*}
\kappa = \frac{\dot{x} \ddot{y} - \dot{y}\ddot{x}}{v^3} = j \sqrt{g^3\ell} 
\; \frac{2h - 3u}{v^3}
\end{equation*}
which has the same period.  From equation \eqref{E:kappabreve} and the change 
of variables theorem we get an expression for $\k$ entirely in terms of the 
reduced system.
\begin{equation}
\k = \frac{j}{2\pi} \int_0^T \frac{2h -3u}{2h - 2u -w^2} \; d\tau 
\end{equation}
A few numerically computed isogonal curves in $\mathscr{R}$ are shown in 
figure \ref{F:hjspace}.  The isogonal curves radiate from the point 
$(h,j)=(1,0)$.  

   It is common to approximate the behavior of the spherical pendulum when
$(h,j)$ is near the vertex $(-1,0) \in \mathscr{P}$ by linearizing the system
about its fixed point for $(h,j)=(-1,0)$.  In this case we might expect that 
the $(x,y)$-curves for the spherical pendulum could be well approximated by
Lissajous curves since we are projecting the state variables to the position
variables $(x,y)$.  However because the spherical pendulum is symmetrical about 
its axis the two frequencies of the linearized system are equal and therefore, 
regardless of the initial conditions, the only type of Lissajous curves 
generated by projecting the state variables of the linearized system to the 
position variables are ellipses or line segments.  Furthermore the eigenvalues 
of the linearized system are purely imaginary so the Hartman-Grobman theorem 
does not apply \cite{hartman}, \ie there need not be a neighborhood in which 
the spherical pendulum is equivalent to its linearization.  The spherical 
pendulum is in fact highly nonlinear \cite{malkin}.  No matter how close $(h,j)
\in \mathscr{R}$ is to $(-1,0)$ the $(x,y)$-curves resemble Hypotrochoids more
than they do Lissajous curves (see insets in figure \ref{F:hjspace}), which is
not too surprising given the geometry of the spherical pendulum.

    It is interesting to compare the spaces $\mathscr{P}$ and $\mathscr{R}$ 
(figure \ref{F:hjspace}) with the spaces $\overline{\mathcal{H}}$ (figure 
\ref{F:isogonal}) and $\mathcal{H}$ (figure \ref{F:SpaceH}).  For these spaces 
the reflection about the horizontal axis maps isogonal curves to isogonal 
curves and the range of observed values for $\k$ is in the open interval 
$(-1,1)$.  Also in these spaces the range of observed values for $|\k|$ below 
the horizontal axis is the open interval $(1/2,1)$.

   Numerical analysis indicates that in $\mathscr{R}$ the value of $|\k|$ can 
be within any tiny distance above $1/2$ but it can not be $1/2$ or less.  For 
$h \in (-1,1)$ the value of $\k$ appears to converge to $1/2$ as $j$ approaches 
$0$ from above while it appears to converge to $-1/2$ as $j$ approaches $0$ 
from below.  Numerical analysis also indicates that the value of $|\k|$ can be 
within any tiny distance below $1$ but it can not be $1$ or more.  For $h>1$ 
the value of $\k$ appears to converge to $1$ as $j$ approaches $0$ from above 
while it appears to converge to $-1$ as $j$ approaches $0$ from below.  As 
$(h,j)$ crosses the horizontal axis of $\mathscr{P}$ the $(x,y)$-curves 
transition by collapsing to a line segment.

   In $\overline{\mathcal{H}}$ the value of $\k$ converges to $1/2$ as 
$(\chi, \rho)$ approaches the Tusi couple, $(-1,1)$, from below and it 
converges to $-1/2$ as $(\chi, \rho)$ approaches the Tusi couple from above.  
As $(\chi, \rho)$ crosses the Tusi couple the $(x,y)$-curves transition by 
collapsing to a line segment.

   In $\mathscr{P}$ there is a half-line for which the corresponding 
$(x,y)$-curves are line segments while in $\overline{\mathcal{H}}$ there is a
single point for which the $(x,y)$-curves is a line segment.  On the other hand
in $\mathscr{P}$ the isogonal curves radiant from a single point while in
$\mathcal{H}$ all of the isogonal curves pass through a half-line.

   There are some important differences between in $\mathscr{P}$ and 
$\mathcal{H}$.  The value of $\k$ for billiard like Hypotrochoids must be in 
the interval $[-1/2,1/2]$ so the $(x,y)$-curves generated by the spherical 
pendulum do not resemble billiard like Hypotrochoids.  They also do not 
resemble rhodonea curves since they only pass through their center of symmetry 
when they collapse to line segments.  And unlike Hypotrochoids an $(x,y)$-curve 
for the spherical pendulum can have inflection points.  This happens when 
$(h,j)$ is inside the oval shown in figure \ref{F:hjspace}.

   The value of $\k$ is associated with the intrinsic precession of the 
spherical pendulum.  The motion of the projected image of the bob in the 
support plane can be thought of as a compound motion of a point around an 
ellipse with the rotation of the ellipse about its center.  The {\it intrinsic 
precession} of the spherical pendulum is the rotation of the ellipse.

   It should be briefly pointed out that the intrinsic precession of a
spherical pendulum is distinct from Foucault precession.  The difference was
recognized by Foucault himself.  As is well known, Foucault designed and built 
a spherical pendulum in 1851 to measure the rotation of the Earth 
\cite{foucault}.  Foucault's pendulum was designed to be set librating within a
vertical plane.  Since the pendulum's support is rotating with the Earth the
plane of libration appears to rotate relative to the ground below it.  This
motion is called {\it Foucault precession}.

   Foucault found that it can be difficult to start a spherical pendulum with 
sufficiently little angular momentum so that it will appear to oscillate within
a vertical plane.  Even a small amount of angular momentum led to an intrinsic 
precession comparable to the Foucault precession.  To overcome this apparent 
``instability'' he designed his pendulum with very large $m$ and $\ell$.  

   If $|j|$ is small enough the periodic arcs of the $(x,y)$-curve generated by
the spherical pendulum can be fairly well approximated by the periodic arcs of
an ellipse (see top left inset in figure \ref{F:hjspace}).  The approximating
ellipse turns in the same direction as the bob so the total curvature of a
periodic arc of the $(x,y)$-curve of the spherical pendulum is the sum of the
total curvature of a periodic arc of the approximating ellipse with the
intrinsic precession of the approximating ellipse.  For any nondegenerate 
ellipse the value of the total curvature of a periodic arc is $\k = \pm 1/2$ 
(recall figure \ref{F:SpaceH}).  The amount of intrinsic precession that 
occurs during the pendulum's period is
\begin{equation*}
\left\{
\begin{array}{cl}
 \k - 1/2  &  \mathrm{for} \quad \k >  1/2 \\
 \k + 1/2  &  \mathrm{for} \quad \k < -1/2 
\end{array}
\right.
\end{equation*}
If $|j|$ is large the periodic arcs of the $(x,y)$-curve are not well 
approximated by periodic arcs of an ellipse (see bottom left inset in figure 
\ref{F:hjspace}).  In these cases it is not very helpful to think of the 
$(x,y)$-curve as being generated by a precessing ellipse.  

   The value of $\k$ is defined for all $(h,j) \in \mathscr{R}$ regardless of 
how poorly the periodic arcs of an $(x,y)$-curve can be approximated by the 
periodic arcs of an ellipse.  The value of $\k$ tells us how far the 
$(x,y)$-curve turns during the pendulum's period as well as providing us with
information about the symmetry of the $(x,y)$-curve.

\section{Dissipative examples}
\label{S:dissipative}

   In this section we consider two related models for natural systems that 
generate curves with periodic curvature.  These are the Shenoy-Rutenberg model
for the paths taken by the bacterium {\it Listeria monocytogenes} in 
eukaryotic cells \cite{hotton} and the Barkley-Kevrekidis model for the meander
of spiral waves in excitable media such as the BZ reaction \cite{barkley}.

\medskip
\noindent
{\bf Example 3 - Actin based motility} \\ {\it L. monocytogenes} transport 
themselves in eukaryotic cells by catalyzing the polymerization of the 
cytoskeletal protein actin.  This method of propulsion can result in a 
bacterium following a complicated path within the cytosol at a fairly constant 
speed.  The curvature of the paths tends to vary periodically with time so the 
curvature and speed have a common minimal period and we can apply the theory 
from section \ref{S:kappabreve}.

   Recall from section \ref{S:kappabreve} that the velocity's orientation is
$\varphi(t) = \; \overline{\kappa}\; t + \widetilde{\varphi}(t)$ where   
$\widetilde{\varphi}(t)$ is periodic.  In the Shenoy-Rutenberg model 
$\widetilde{\varphi}(t) = (\Omega/\omega_0) \sin(\omega_0 t)$ where $\omega_0$ 
is the angular frequency of the bacterium's spin about its long axis and 
$\Omega$ is a monotonic function of the distance of the effective propulsive 
force from the long axis.  In the model the speed of the bacterium is the 
constant $v_0$.  The common minimal period of the curvature and speed is 
$T=2\pi/\omega_0$ and so $\overline{\kappa} = \omega_0 \k$.  The points of 
maximal curvature occur for $t \in (2\pi/\omega_0) {\bf Z}$ and the points of 
minimal curvature occur for $t \in (\pi/\omega_0) + (2\pi/\omega_0) {\bf Z}$.

   Since $(\Omega/\omega_0) \sin(\omega_0 t)$ is an odd function the full
image of the curve has reflectional symmetry.  Joining an arc over a 
half-period with its reflected image gives a periodic arc.  For integral 
$\k$ the rest of the curve can be obtained by translating the periodic arc.
For non-integral $\k$ the rest of the curve can be obtained by rotating the 
periodic arc about $(\overline{x}, \overline{y})^T$.

\begin{figure}[ht]
\centering
\includegraphics[scale=0.25]{./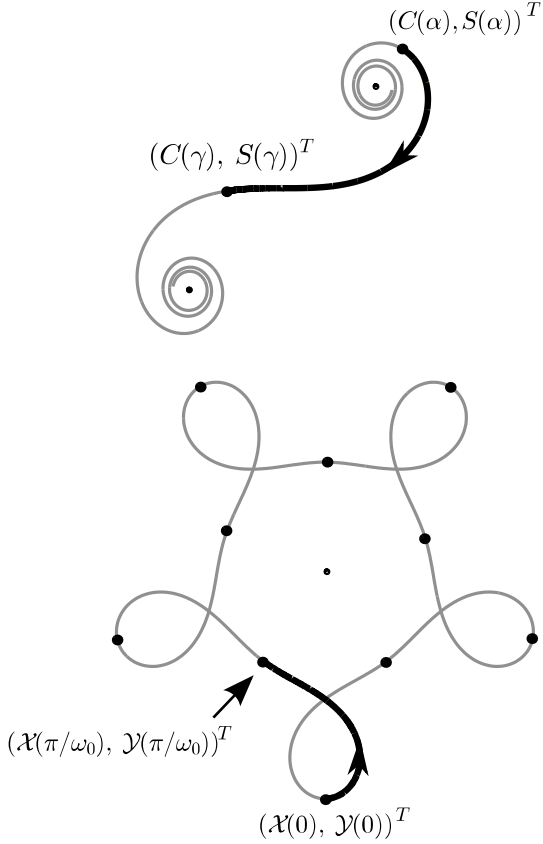}
\caption{Generating a spiral easement approximation for an $(x,y)$-curve.  
Above: A clothoid.  It is symmetrical under a half turn and the center is its 
unique inflection point.  Below: The approximating $(\mathcal{X},\, 
\mathcal{Y})$-curve.  The two arcs highlighted in black are geometrically 
similar to each other.  The $(\mathcal{X},\, \mathcal{Y})$-curve can be 
obtained by successively reflecting its black arc about the five mirror lines
shown in figure \ref{F:approx}.}
\label{F:clothoid}
\end{figure}

    When the initial direction is horizontal, \ie $\varphi(0) = 0$, and the 
starting point is at the origin, \ie $(x(0),y(0))^T=(0,0)^T$, the time 
parameterization for the curve is 
\begin{small}
\begin{equation} 
\label{E:timeparameterization}
\begin{pmatrix}
 x(t) \\ y(t) 
\end{pmatrix} = v_0 \int_0^t
\begin{pmatrix}
\cos(\omega_0 \k \tau + (\Omega/\omega_0) \, \sin(\omega_0 \, \tau)) \\
\sin(\omega_0 \k \tau + (\Omega/\omega_0) \, \sin(\omega_0 \, \tau)) 
\end{pmatrix}
d\tau
\end{equation}
\end{small}
There is no closed form for this integral in terms of elementary functions but 
we can obtain a closed form that accurately approximates it by borrowing a 
technique from civil engineering known as spiral easement.  This technique 
varies the curvature of roads and train tracks in a piecewise linear fashion.  
For the Shenoy-Rutenberg model we approximate the curvature with the triangular
wave form
\begin{equation*}
\kappa(t) \approx \frac{\omega_0\, \k}{v_0} + \frac{\Omega}{v_0} \left( 
\frac{8}{\pi^2} \arcsin(\cos(\omega_0 t)) \right)
\end{equation*}
This gives a piecewise quadratic approximation for $\widetilde{\varphi}(t)$:
\begin{small}
\begin{equation*}
\frac{\Omega}{\omega_0} \; \left(\frac{16}{\pi^2} \; 
\arcsin(\cos(\omega_0\,  t/2)) 
\arcsin(\sin(\omega_0\,  t/2))\right)
\end{equation*}
\end{small}
The rotational symmetry of the curve is unaffected by this approximation since
only the $\widetilde{\varphi}(t)$ term in $\varphi(t)$ is altered and the 
reflectional symmetry is unaffected since $\widetilde{\varphi}(t)$ remains an 
even function.  Because of the symmetry the error varies periodically and so 
remains bounded.  The quantity
\begin{small}
\begin{eqnarray*} 
\left| \frac{16}{\pi^2} \; \arcsin(\cos(\omega_0 t/2)) 
\arcsin(\sin(\omega_0 t/2)) - \sin(\omega_0 t) \right|
\end{eqnarray*} 
\end{small}
is never more than $3.21^o$ at any time.  Figure \ref{F:approx} shows an
example of a spiral easement approximation for the parametrized curve in
equation \eqref{E:timeparameterization}

   We let $(\mathcal{X}(t),\, \mathcal{Y}(t))^T$ denote the time 
parameterization for the approximating curve to \eqref{E:timeparameterization}.
On the interval $[0, \pi/\omega_0]$ the curvature can be written as the linear 
polynomial $2\beta(\alpha - \beta t)$ 
where 
\begin{equation*}
\alpha = \frac{\pi \omega_0 \k + 4\Omega}{2\pi \beta v_0} \qquad \qquad
\beta = \frac{2}{\pi} \sqrt{\frac{\omega_0 \Omega}{v_0}}
\end{equation*}
Integrating over a sub-interval $[0,t] \subseteq [0, \pi/\omega_0]$ gives us a 
closed form time parameterization for an arc of the $(\mathcal{X}(t),\, 
\mathcal{Y}(t))$-curve,
\begin{small}
\begin{equation}
\label{E:clothoid}
\begin{pmatrix}
 \mathcal{X}(t) \\ \mathcal{Y}(t) 
\end{pmatrix} = \frac{v_0}{\beta} R_{(\alpha^2)}
\begin{pmatrix}
 -C(\alpha-\beta t) + C(\alpha) \\
~~S(\alpha-\beta t) - S(\alpha) 
\end{pmatrix}
\end{equation}
\end{small}
where $C(t)$, $S(t)$ are the Fresnel trigonometric functions.  

\begin{figure}[ht]
\centering
\includegraphics[scale=0.55]{./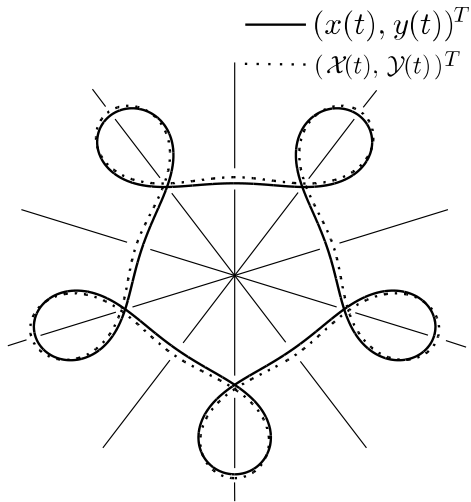}
\caption{An $(x,y)$-curve with $(\k,\Omega/\omega_0)=(4/5,1)$ along with its 
approximating $(\mathcal{X},\mathcal{Y})$-curve translated to have the same 
center.  The curves' five mirror lines are also displayed.}
\label{F:approx}
\end{figure}

   The planar curve $t \mapsto (C(t),S(t))^T$ is known as a clothoid\footnote{
It is also known as Euler's spiral and as Cornu's spiral.}.  It is shown at the
top of figure \ref{F:clothoid}.  It has unit speed so $t$ is the arc length
from it center, $(0,0)^T$ to $(C(t),S(t))^T$.  The curvature at $(C(t),S(t))^T$
is $2t$ so every possible curvature occurs at exactly one point of the 
clothoid.  Since the curvature is monotonic the clothoid does not intersect 
itself and since the curvature is unbounded in the positive and negative
directions the clothoid spirals around two points.

    The right hand side of equation \eqref{E:clothoid} is the application of a 
odd similarity transformation to an arc of the clothoid.  This approximation 
technique amounts to taking the extremal curvatures of the $(x,y)^T$-curve, 
determining the two points on the clothoid where these extremal curvatures
occur, and applying an odd similarity to the arc in the clothoid connecting the
points of extremal curvature as shown in figure \ref{F:clothoid}.  The rest of 
the $(\mathcal{X},\, \mathcal{Y})$-curve is obtained by the action of the 
symmetry group of the $(\mathcal{X},\, \mathcal{Y})$-curve.  The approximation 
can be further refined by translating the $(\mathcal{X}, \mathcal{Y})$-curve so 
that it has the same center as the $(x,y)$-curve (see figure \ref{F:approx}).

   The value of $\k$ determines the rotational symmetry of the curve.  The 
value of $\Omega/\omega_0$ determines the length of the clothoid arc used
in approximating the curve.  The effect of $\Omega/\omega_0$ for fixed $\k$ is 
perhaps best illustrated in the $\k=1/2$ case (see figure 
\ref{F:kappabreveOneHalf}) because there are fewer self-intersections to deal 
with.

   As $\Omega/\omega_0$ varies the points of minimal curvature oscillate in 
unison along mirror lines and the points of maximal curvature oscillate in 
unison along mirror lines.  For small $\Omega/\omega_0$ the curve has an oval 
shape with the points of minimal curvature closer to the center than the points
of maximal curvature.  As $\Omega/\omega_0$ increases the points of minimal 
curvature move toward the center and inflection points appear.  Next the points
of minimal curvature pass through the center together and then move outward.  
Eventually the points of minimal curvature go far from the center while the 
points of maximal curvature go near the center.  The points of maximal 
curvature pass through the center together and then proceed outward.  
Afterwards the points of minimal curvature pass back through the center again. 
The process is reminiscent of a loom except the curve becomes wound up around 
four points (when $\k =1/2$) instead of being woven.

\begin{figure}[ht]
\centering
\includegraphics[scale=0.225]{./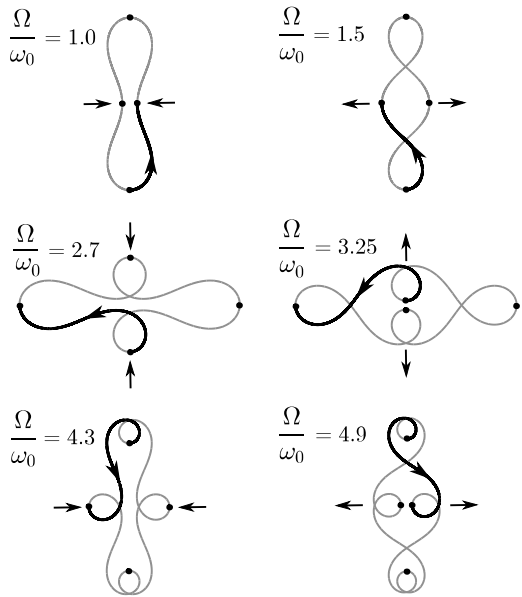}
\caption{$(\mathcal{X},\mathcal{Y})$-curves with fixed $\k = 1/2$ and 
increasing $\Omega/\omega_0$.  Each row shows a pair of points with extremal 
curvature passing through each other at the center thereby introducing a pair 
of crossing points which persist as $\Omega/\omega_0$ continues to increase.}  
\label{F:kappabreveOneHalf}
\end{figure}

    More generally, for $\k \notin {\bf Z}$, the points of maximal curvature 
coincide with the center if and only if $(\mathcal{X}(0), \mathcal{Y}(0))^T = 
(\overline{\mathcal{X}},\overline{\mathcal{Y}})^T$ while the points of minimal 
curvature coincide with the center if and only if $(\mathcal{X}(\pi/\omega_0), 
\mathcal{Y}(\pi/\omega_0))^T = (\overline{\mathcal{X}},
\overline{\mathcal{Y}})^T$.  If the points of minimal curvature coincide with 
the center then 
\begin{small}
\begin{equation}
\label{E:Fcosine}
\cos(\alpha^2)\left( C(\gamma) - C(\alpha)\right)
+\sin(\alpha^2)\left( S(\gamma) - S(\alpha)\right) = 0 
\end{equation}
\end{small}
where $\gamma=\alpha-\beta\pi/\omega_0$.  This gives us a condition which must
hold between $\k$ and $\Omega/\omega_0$.  The same condition holds if the 
points of maximal curvature coincide with the center except $\k$ is replaced
with $-\k$.  These two conditions determine two sets of curves in the 
$(\k,\; \Omega/\omega_0)$ parameter plane which are shown in figure 
\ref{F:parameterplane}.  

   These two sets of curves only intersect at integer values of $\k$. It turns  
out in these cases that the congruence $\mathcal{G}_{\k, T}$ reduces to 
the identity map, that the image of the $(\mathcal{X}, \mathcal{Y})$-curve is 
closed, and that $\k$ is its turning number.

\begin{figure}[ht]
\centering
\includegraphics[scale=0.60]{./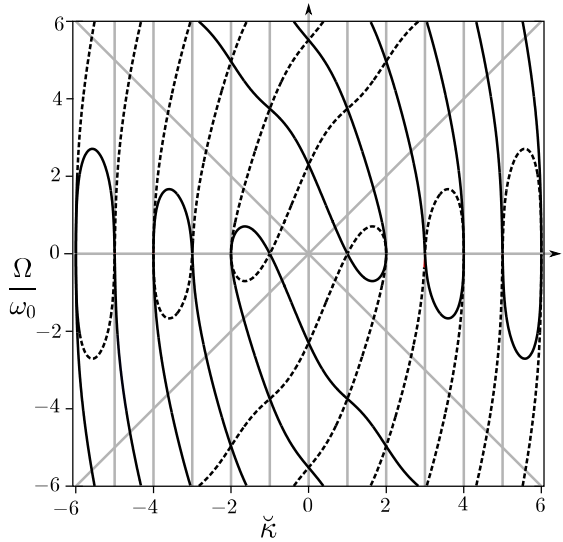}
\caption{The parameter plane for equation \eqref{E:clothoid} and the curves
defined by \eqref{E:Fcosine}.  The solid black curves correspond to 
$(\mathcal{X},\, \mathcal{Y})$-curves whose points of minimal curvature 
coincide with the center.  The dotted black curves correspond $(\mathcal{X}, \, 
\mathcal{Y})$-curves whose points of maximal curvature coincide with the 
center.  The solid black and dotted black curves only intersect on the vertical 
gray lines which correspond to integer values for $\k$.  The gray diagonal 
lines correspond to the appearance of inflection points in the $(\mathcal{X},
\mathcal{Y})$-curves.}
\label{F:parameterplane}
\end{figure}

\medskip
\noindent
{\bf Example 4 - Spiral tip meander} \\ 
  Barkley's model \cite{barkley} for the spiral wave tip meander can be written 
as the ordinary differential equation 
\begin{equation*}
\begin{pmatrix}
\dot{x}       \\
\dot{y}       \\
\dot{\varphi} \\
\dot{v}       \\
\dot{w}       
\end{pmatrix}
=
\begin{pmatrix}
 v \cos(\varphi)                                   \\
 v \sin(\varphi)                                   \\
 \gamma_0 \, w                                        \\
 v \; (-1/4 + (10/3) v^2 + \alpha_2 \, w^2 - v^4)  \\
 w \; (-1 + v^2 - w^2)
\end{pmatrix} 
\end{equation*}
Here we have replaced the variable name `$s$' in Barkley's equations (3) and 
(4) with the variable name `$v$' in keeping with the convention in this article 
that $s$ stands for arc length and $v$ stands for speed.

  From equation \eqref{E:kappabreve} the total curvature per periodic arc is  
\begin{equation*}
\k \approx \frac{\sqrt{7} \, \nu}{\pi} \int_0^{\sqrt{14} \pi/7} w(\tau) \, 
d\tau
\end{equation*}
where $\nu = \gamma_0/\sqrt{28}$.  The isogonal curves for $\k = 1,2,3,4$ are 
shown in figure \ref{F:comparison} where $\mu = -(\alpha_2+5)/5$.  

\begin{figure}[ht]
\centering
\includegraphics[scale=0.55]{./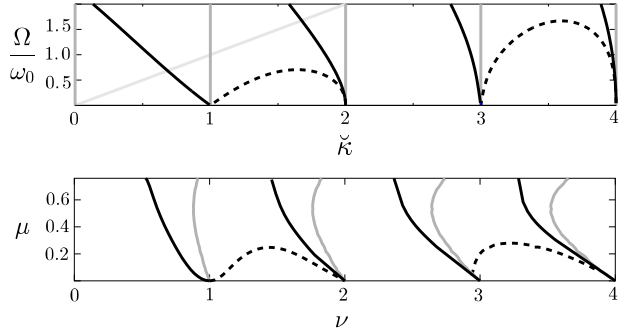}
\caption{(Above) Blow up of a region in figure \ref{F:parameterplane}.  (Below)
The $(\nu, \mu)$ parameter space for Barkley's ODE (compare to figure 3 in 
\cite{barkley}).  Curves in the two parameter spaces are depicted in the same 
manner as in figure \ref{F:parameterplane}.}
\label{F:comparison}
\end{figure}

\medskip
\noindent
\section{Conclusion}

\end{document}